\documentclass[12pt]{article}
\usepackage[left=2.5cm,right=2.5cm,
    top=2.5cm,bottom=2cm,bindingoffset=0cm]{geometry}
\usepackage{amsthm}
\usepackage{graphicx,textcomp}
\usepackage{amsmath,bm,amsfonts,mathrsfs,amssymb}
\usepackage{hyperref}
\newtheorem{theorem}{Theorem}[section]

\newtheorem{prop}[theorem]{Proposition}

\usepackage{enumitem}   
\usepackage{amsmath, amsthm, amssymb, amstext}
\usepackage{mathrsfs}
\usepackage{array, amsfonts, mathrsfs}
\usepackage{latexsym, amsmath}

\usepackage{cancel}

\usepackage{xcolor}

\providecommand{\keywords}[1]
{
  \small	
  {\textit{Keywords: }} #1
}

\providecommand{\msc}[1]
{
  \small	
  {\textit{MSC: }} #1
}

\begin{document}
\title{On perturbations of the DN map of a disk causing changes of surface topology}
\author{D.V. Korikov\thanks{PDMI RAS} \thanks{ITMO University} \thanks{e-mail: thecakeisalie@list.ru, \ ORCID:\href{https://orcid.org/0000-0002-3212-5874}{0000-0002-3212-5874}} \thanks{This work was supported by the Ministry of Science and Higher Education of the Russian Federation (agreement 075-15-2025-344 dated 29/04/2025 for Saint Petersburg Leonhard Euler International Mathematical Institute at PDMI RAS).}}
\date{\today}
\maketitle
\begin{abstract}
We establish the degeneration of the Schottky double of a genus $1$ Riemann surface with boundary as its DN map tends to the DN map of the unit disk.
\end{abstract}

Let $(M,g)$ be a smooth orientable surface with boundary $\mathbb{T}$ (the unit circle) and the smooth metric $g$. Introduce the DN map of $(M,g)$ by $\Lambda:\, f\mapsto\partial_\nu u^f$, where $u^f$ is a harmonic extension of $f\in C^\infty(\mathbb{T};\mathbb{C})$ into $(M,g)$ and $\nu$ denotes the unit outvard normal on $\mathbb{T}=\partial M$. As is well-known \cite{LU}, the DN map $\Lambda$ determines $(M,g)$ up to a conformal diffeomorphism that does not move the points of $\Gamma$. 

It has recently been proved that the conformal class of a surface depends continuously (with respect to a Teichm\"uller metric) on its DN map provided that the surface topology is fixed \cite{BKTeich,Kor CAOT}. At the same time, the surface topology itself is unstable under small perturbations of the DN map \cite{Kor ZNS}. An example is a disk $M$ with a small handle glued in: as the handle size tends to 0, the DN map of $M$ tends (in the space $\mathbb{B}:=B(H^{1}(\mathbb{T};\mathbb{C});L_2(\mathbb{T};\mathbb{C}))$) to the DN map $|\partial_\varphi|$ ($\varphi={\rm arg}z$) of the (unperturbed) disk $\mathbb{D}:=\{z\in\mathbb{C} \ | \ |z|\le 1\}$. 

Based on the above example, one can suppose that in general case the topological defects (say, extra handles) of the perturbed surface are separated from its near-boundary part by short closed geodesics  as $\Lambda$ becomes close to $|\partial_\varphi|$ (thereby the  surface genus ``observed from the boundary'' is effectively lowered). However, any curve on $M$ can be inflated, without change of its DN map, by multiplying the metric by a conformal factor. Thus, the above hypotesis makes sense only for some canonical representative of the conformal class of metrics on $M$. It is natural to chose, as such representative, the metric of constant curvature $-1$ such that the boundary $\partial M=\mathbb{T}$ is geodesic in this metric. 

Such metric is constructed as follows. Recall that a conformal class of metrics and a choice of orientation on $M$ determine a complex structure on it (a biholomorphic sub-atlas of the smooth atlas on $M$ such that $\star dz=-idz$ holds in any holomorphic coordinates $z$, where $\star$ is the Hodge star operator associated with the metric and the orientation on $M$). Introduce the Schottky double $X$ of $M$ obtained by gluing two copies $M\times\{\pm\}$ of $M$ (endowed with opposite orientations) along their boundaries. Then $X$ is a Riemann surface without boundary endowed with the anti-holomorphic involution $\tau:\,(x,\pm)\mapsto(x,\mp)$ ($x\in M$) and $M$ can be embedded smoothly into $X$ by identifying it with one of its copy $M\times\{+\}$. If ${\rm gen}(M)>0$, then $X$ admits a unique conformal metric ${\rm h}=\rho|dz|^2$ of the Gaussian curvature $-1$ (a hyperbolic metric); thus, the involution $\tau$ is an isometry on $(X,{\rm h})$. Since the boundary $\partial M$ (embedded into $X$) coincides with the set of fixed points of the isometric map $\tau$, $\partial M$ becomes a geodesic curve on $(X,{\rm h})$ after suitable smooth reparametrization.

Denote by $l_1(M)$ the length of the shortest closed geodesic curve on $(M,{\rm h})$. In this note, we prove the following statement.
\begin{prop}
\label{main}
Let ${\rm gen}(M)=1$. Then $l_1(X)\to 0$ as $\Lambda\to|\partial_\varphi|$ in $\mathbb{B}$. 
\end{prop}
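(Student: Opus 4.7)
I would argue by contradiction. Suppose some sequence of DN maps $\Lambda_n$ converges to $|\partial_\varphi|$ in $\mathbb{B}$ yet $l_1(X_n)\geq\varepsilon>0$ along a subsequence. The plan is to (i) extract a limit bordered Riemann surface $M_\infty$ of genus $1$ by Mumford compactness applied to the doubles $X_n$, (ii) conclude by continuity of the DN map in the conformal class that $\Lambda_{M_\infty}=|\partial_\varphi|$, and (iii) apply the boundary-rigidity theorem of Lassas--Uhlmann \cite{LU} to obtain a contradiction with ${\rm gen}(M_\infty)=1$, since $|\partial_\varphi|$ is the DN map of the genus-$0$ disk $\mathbb{D}$.

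\textbf{Compactness step.} Since $\Lambda$ depends only on the conformal class of $g_n$, I replace $g_n$ by the hyperbolic metric $h_n$ of $X_n$ restricted to $M$. Each $X_n$ is a closed hyperbolic Riemann surface of genus $2$ equipped with an antiholomorphic isometric involution $\tau_n$ whose fixed locus is $\mathbb{T}$; in particular $\mathbb{T}$ is itself a closed geodesic of $X_n$, so $l(\mathbb{T})\geq l_1(X_n)\geq\varepsilon$. Mumford's compactness theorem, in its $\mathbb{Z}_2$-equivariant form, therefore extracts a subsequence along which $(X_n,\tau_n)$ converges smoothly, modulo diffeomorphism, to a closed hyperbolic $(X_\infty,\tau_\infty)$ of the same topological type. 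Taking the $\tau_\infty$-quotient yields a bordered hyperbolic surface $M_\infty$ of genus $1$ with geodesic boundary and an induced identification of $\partial M_\infty$ with $\mathbb{T}$.

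\textbf{DN-limit and contradiction.} The DN map depends continuously on the conformal class in the $\mathbb{B}$-topology: this is the straightforward direction of the Teichm\"uller continuity used in \cite{BKTeich,Kor CAOT}, and it follows from smooth convergence of $h_n$ on compact subsets of $M$ together with standard elliptic estimates up to the boundary. Hence $\Lambda_n\to\Lambda_{M_\infty}$ in $\mathbb{B}$, so $\Lambda_{M_\infty}=|\partial_\varphi|$ by hypothesis. The uniqueness theorem \cite{LU} would then produce a conformal diffeomorphism $M_\infty\to\mathbb{D}$ fixing $\mathbb{T}$ pointwise, which is impossible because ${\rm gen}(M_\infty)=1>0={\rm gen}(\mathbb{D})$.

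\textbf{Main obstacle.} The principal technical point is aligning the boundary parametrizations in the compactness step: Mumford's theorem provides convergence of the complex structures on $X_n$ only up to biholomorphism, and an ambient biholomorphism of $X_\infty$ need not preserve the pointwise identification $\mathbb{T}\hookrightarrow X_\infty$ used to define the DN operator on $L_2(\mathbb{T})$. A clean way around this is to run the whole compactness argument inside the Teichm\"uller space of bordered Riemann surfaces of type $(1,1)$ with the boundary pointwise marked by $\mathbb{T}$, a finite-dimensional space whose "no short closed geodesic" sublevel sets are compact by Mumford applied equivariantly to the doubles --- the bound $l(\mathbb{T})\geq\varepsilon$ noted above being precisely what also controls the boundary length inside this Teichm\"uller space.
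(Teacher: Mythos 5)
Your strategy is genuinely different from the paper's, and it has a real gap at its central compactness step. The paper never extracts a limiting surface: it reads off from the spectral data of the Hilbert transform $H=-\Lambda^{-1}\partial_\varphi$ (whose nontrivial discrete eigenvalues $\pm i\mu$ satisfy $\mu\to1$) the $b$-period matrix of the double in a symmetric homology basis normalized to a fundamental domain, shows via theta constants that the hyperelliptic curve representing $X$ must degenerate, and only then invokes Mumford compactness --- in the \emph{unmarked} moduli space $\mathcal{M}_2$, where it is a theorem. Your argument instead needs a \emph{marked} compactness statement: to make sense of $\Lambda_n\to\Lambda_{M_\infty}$ as operators on the fixed circle $\mathbb{T}$, you must extract a limit not only of the complex structures but also of the pointwise identifications $\mathbb{T}\hookrightarrow\partial M_n$. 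Mumford's theorem gives convergence only modulo diffeomorphism, and the bound $l_1(X_n)\ge\varepsilon$ controls the boundary only as an abstract closed geodesic; it places no constraint on the parametrizing diffeomorphism $\mathbb{T}\to\partial M_n$, which can oscillate without changing any geodesic length. Your proposed remedy rests on a false premise: the Teichm\"uller space of type $(1,1)$ surfaces with \emph{pointwise} marked boundary is infinite-dimensional (it fibers over the ordinary finite-dimensional Teichm\"uller space with fiber essentially $\mathrm{Diff}(\mathbb{T})$ modulo a finite-dimensional group), and its sublevel sets $\{l_1\ge\varepsilon\}$ are not compact, for exactly the reason above. Relatedly, the claimed continuity $\Lambda_n\to\Lambda_{M_\infty}$ cannot follow from smooth convergence of the metrics on compact subsets of the interior: the DN map is a boundary object and requires convergence up to the boundary in a fixed boundary coordinate, which is again the marking problem. (The cited stability results \cite{BKTeich,Kor CAOT} go in the opposite direction --- from DN closeness to Teichm\"uller closeness --- and are proved for fixed topology, so they cannot be borrowed here.)

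If the boundary marking could be controlled, the remainder of your argument (equivariant limit, Lassas--Uhlmann rigidity, genus contradiction) would go through and would give a pleasantly soft alternative to the paper's explicit period-matrix computation; note also that after losing the pointwise boundary identification you would need a gauge-invariant version of the uniqueness theorem, i.e., one allowing a boundary reparametrization, to still conclude $M_\infty\cong\mathbb{D}$. As it stands, however, the key compactness assertion is unjustified and, in the form you state it, false; supplying the missing control is essentially the content that the paper provides by hand through the period matrix.
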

\begin{proof}
1) Introduce the {\it Hilbert transform} 
$$H=-\Lambda^{-1}\partial_\varphi$$ 
of $M$. Recall that any DN map $\Lambda$ is an order one pseudo-differental operator coinciding $|\partial_\varphi|$ up to a smoothing operator \cite{LeeU}. In addition, it is unbounded self-adjoint operator in $L_2(\mathbb{T};\mathbb{C})$. Then $H$ is a bounded operator in $L_2(\mathbb{T};\mathbb{C})$ and $iH$ is a bounded self-adjoint operator on the space $H^{1/2}(\mathbb{T};\mathbb{C})/\mathbb{C}$ endowed with the inner product 
$$(\cdot,\cdot)_{\Lambda}:=(\Lambda^{1/2}\cdot,\Lambda^{1/2}\cdot)_{L_2(\mathbb{T};\mathbb{C})}.$$ 
The spectrum of $H$ is described in Lemma 2.1, \cite{KorCJM}, see also Lemma 1, \cite{B}. The essential spectrum of $H$ consists of two eigenvalues $-i$, $+i$ of infinite multiplicity; the corresponding eigenspaces consist of the boundary traces of holomorphic/anti-holomorphic functions on $M$, respectively. The discrete spectrum spectrum of $H$ consists of zero (${\rm Ker}H=\mathbb{C}$) and two simple eigenvalues $\pm i\mu$ ($\mu_k\in(0,1)$). Let $(i\mu,\eta)$ and $(-i\mu,\overline{\eta})$ be the eigenpairs of $H$, where $\eta$, $\overline{\eta}$ are normalized with respect to the inner product $(\cdot,\cdot)_{\Lambda}$; since $H^2+I$ is a smoothing operator, we have $\eta\in C^\infty(\mathbb{T};\mathbb{C})$. The convergence $\Lambda\to|\partial_\varphi|$ in $\mathbb{B}$ implies 
$$H\to H_0 \text{ in } B(L_2(\mathbb{T};\mathbb{C});L_2(\mathbb{T};\mathbb{C})),$$ 
where $$H_0:=-|\partial_\varphi|^{-1}\partial_{\varphi}$$ is the Hilbert transform of $\mathbb{D}$ (the spectrum of $H$ is $\{0,i,-i\}$). Hence one has $\lambda_{\pm 1}\to\pm i$ and $\mu\to 1$.

2) Introduce the space $\mathcal{D}=\{\upsilon\in\Omega^1(M;\mathbb{R}) \ | \ d\upsilon=0, \ \delta\upsilon=0 \text{ in } M, \ \upsilon(\partial_\varphi)=0 \text{ on } \partial M\}$ of harmonic differentials on $M$ normal to $\partial M$. There is the following connection between elements of $\mathcal{D}$ and the space $H^0_{sym}(X;K):=\{w\in H^0(X;K)  \ | \ \overline{\tau^*w}=w\}$ of symmetric Abelian differentials on the double $X$ (see Lemma 2.2, \cite{KorCJM}): $\upsilon$ is an element of $\mathcal{D}$ if and only if $w:=(i-\star)\upsilon$ can be extended to the Abelian differential on $X$ by the symmetry $\overline{\tau^*w}=w$. Since $H^0(X;K)=H^0_{sym}(X;K)+iH^0_{sym}(X;K)$, we have ${\rm dim}_{\mathbb{R}}\mathcal{D}={\rm dim}_{\mathbb{R}}H^0_{sym}(X;K)={\rm dim}_{\mathbb{C}}H^0(X;K)={\rm gen}(X)=2$.

Let $f\in C^\infty(\mathbb{T};\mathbb{C})$. Then the Hodge orthogonal decomposition
\begin{equation}
\label{H decomp}
\star du^f=du^h-\star\upsilon
\end{equation} is valid, where $h\in\mathbb{C}^{\infty}(\mathbb{T};\mathbb{C})$ and $\upsilon\in\mathcal{D}$. Restricting the last equality to the boundary, one obtains 
$$\Lambda f=\partial_\varphi h-\upsilon(\nu), \quad -\partial_\varphi f=\Lambda h.$$
Integrating these equations, one arrives at 
\begin{equation}
\label{bound traces harm 1}
h=Hf, \qquad \upsilon(\nu)=-[\partial_\varphi\Lambda^{-1}\partial_\varphi+\Lambda]f=-\Lambda(H^2+I)f.
\end{equation}
Since ${\rm Ran}(H^2+I)=\{c_1\eta+c_2\overline{\eta}+c_3 \ | \ c_{1,2,3}\in\mathbb{C}\}$, one can assume that 
\begin{equation}
\label{bound traces harm 2}
f=c\eta+\overline{c\eta},
\end{equation} 
where $c\in\mathbb{C}$. Since ${\rm dim}_{\mathbb{R}}\mathcal{D}={\rm {\rm Ran}\Lambda(H^2+I)}=2$, each boundary trace $\upsilon(\nu)$ of $\upsilon\in\mathcal{D}$ admit representation (\ref{bound traces harm 1}), (\ref{bound traces harm 2}).

Let $\star du^{f'}=du^{h'}-\star\upsilon'$ be another Hodge decomposition, where $f'=c'\eta+\overline{c'\eta}$, $h'\in C^\infty(\mathbb{T};\mathbb{C})$, and $\upsilon'\in\mathcal{D}$. In view of (\ref{H decomp}), one has 
\begin{align*}
(f,f')_\Lambda=(\Lambda f,f')_{L_2(\mathbb{T};\mathbb{C})}=(du^f,du^{f'})_{L_2(M,T^*M)}=(\star du^f,\star du^{f'})_{L_2(M,T^*M)}=\\
=(du^h,du^{h'})_{L_2(M,T^*M)}+(\star \upsilon,\star \upsilon')_{L_2(M,T^*M)}=(h,h')_\Lambda+(\nu,\nu')_{L_2(M,T^*M)}=\\
=(Hf,Hf')_\Lambda+(\nu,\nu')_{L_2(M,T^*M)}=-(H^2 f,f')_\Lambda+(\nu,\nu')_{L_2(M,T^*M)}
\end{align*}
and
\begin{align*}
(\star\upsilon,\upsilon')_{L_2(M,T^*M)}=(du^{h}-\star du^{f},\nu')_{L_2(M,T^*M)}=(du^{h},\upsilon')_{L_2(M,T^*M)}-0=\\
=\int_M du^{h}\wedge\star\upsilon'=[d\star\nu'=0]=\int_M d(u^{h}\wedge\star\upsilon')=\int_{\partial M}h\star\upsilon'=\\
=(h,\upsilon'(\nu))_{L_2(\mathbb{T};\mathbb{C})}=(-Hf,-\Lambda(H^2+I)f')_{L_2(\mathbb{T};\mathbb{C})}=(Hf,(H^2+I)f')_\Lambda.
\end{align*}
Thus, the inner products between elements of $\mathcal{D}$, $\star\mathcal{D}$ on $M$ are expressed in terms of their boundary data as 
\begin{equation}
\label{bilinear forms via BD}
\begin{split}
(\nu,\nu')_{L_2(M,T^*M)}&=((H^2+I)f,f')_\Lambda=2(1-\mu^2)\Re(c\overline{c'}), \\ 
(\star\nu,\nu')_{L_2(M,T^*M)}&=(Hf,(H^2+I)f')_\Lambda=2(\mu^2-1)\mu\Im(c\overline{c'}).
\end{split}
\end{equation} 

Next, let us choose the canonical homology basis on the double $X$ as follows
\begin{equation}
\label{homobasis}
a_+,a_-:=\tau\circ a_+,b_+,b_-:=-\tau\circ b_+,
\end{equation}
where the curves representing the cycles $a_+,b_+$ belong to $M\subset X$. We introduce the dual basis $\upsilon_a,\upsilon_b$ in $\mathcal{D}$ obeying the conditions
$$\int_{q_+}\upsilon_e=\delta_{qe}, \qquad (q,e=a,b)$$
and represent their boundary data as 
$$\upsilon_\bullet(\nu):=-\Lambda(H^2+I)\big[c_\bullet\eta+\overline{c_\bullet\eta}\big] \qquad (\bullet=a,b).$$ 
Introduce the auxiliary period matrix $\mathfrak{B}$ with entries 
$$\mathfrak{B}_{qe}=\int_{q_+}\star\upsilon_e \qquad (q,e=a,b).$$ 
Then the Riemann bilinear relations 
$$\int_M \zeta\wedge\eta=\int_{a_+}\zeta\int_{b_+}\eta-\int_{b_+}\zeta\int_{a_+}\eta$$
(where $\zeta$ or $\eta$ is normal to $\partial M$) and formulas (\ref{bilinear forms via BD}) imply
\begin{align}
\label{periods of harm diff}
\begin{split}
1&=(\star\upsilon_a,\upsilon_b)_{L_2(M,T^*M)}=2\mu(\mu^2-1)\Im(c_a\overline{c_b}), \\
\mathfrak{B}_{bb}&=-\mathfrak{B}_{aa}=(\upsilon_a,\upsilon_b)_{L_2(M,T^*M)}=2(1-\mu^2)\Re(c_a\overline{c_b}),\\
\mathfrak{B}_{ab}&=-\|\upsilon_b\|^2_{L_2(M,T^*M)}=-2(1-\mu^2)|c_b|^2, \\
 \mathfrak{B}_{ba}&=\|\upsilon_a\|^2_{L_2(M,T^*M)}=2(1-\mu^2)|c_a|^2.
\end{split}
\end{align}
In particular, one has 
$${\rm det}\mathfrak{B}=\mu^{-2}.$$

3) Since homology basis (\ref{homobasis}) is symmetric with respect to the involution $\tau$, the dual basis in $H^0(X;K)$ is of the form $(\omega,\omega^\dag:=\overline{\tau^*\omega})$, where
\begin{equation}
\label{dubasis}
\int_{a_{\pm'}}\omega=\delta_{\pm',+}.
\end{equation}  
Let us represent $\omega$ in the form 
$$\omega=e_a w_a+e_b w_b,$$ 
where $e_a,e_b\in\mathbb{C}$ and the Abelian differentials $w_\bullet$ ($\bullet=a,b$) are obtained by the extension of $(i-\star)\upsilon_\bullet$ by symmetry $\overline{\tau^*w}=w$. From (\ref{periods of harm diff}), (\ref{homobasis}) and the obvious equality
$$\int_{l\circ\tau}\omega^\dag=\overline{\int_{l}\omega},$$ 
it follows that
\begin{equation}
\label{aux to dibasis}
\int_{a_\pm}\omega_a=-\mathfrak{B}_{aa}\pm i, \quad \int_{b_\pm}\omega_a=\mp \mathfrak{B}_{ba}, \quad \int_{a_\pm}\omega_b=-\mathfrak{B}_{ab}, \quad \int_{b_\pm}\omega_b=\mp\mathfrak{B}_{bb}+i.
\end{equation}
The coefficients
\begin{equation}
\label{coef of dibasis}
e_a=\frac{1}{2i}, \qquad e_b=\frac{i\mathfrak{B}_{11}-1}{2\mathfrak{B}_{12}}
\end{equation} 
are founded from (\ref{dubasis}) and (\ref{aux to dibasis}). The $b$-period matrix 
\begin{align*}
\mathbb{B}=\left(\begin{array}{cc}\int_{b_+}\omega & \int_{b_-}\omega\\
\int_{b_+}\omega^\dag & \int_{b_-}\omega^\dag\end{array}\right)=\left(\begin{array}{cc}
\gamma+i\delta & i\beta\\
i\beta & -\gamma+i\delta\end{array}\right) 
\end{align*}
of $X$ (equipped with homology basis (\ref{homobasis})) is founded from (\ref{coef of dibasis}) and (\ref{aux to dibasis}),
\begin{align}
\label{period matr asymp}
\mathbb{B}=-\frac{(\mu^{2}+1)i}{2\mu^2\mathfrak{B}_{ab}}\left(\begin{array}{cc}
1 & \frac{\mu^2-1}{\mu^{2}+1}\\
\frac{\mu^2-1}{\mu^{2}+1} & 1\end{array}\right)+\frac{\mathfrak{B}_{bb}}{\mu^2\mathfrak{B}_{ab}}\left(\begin{array}{cc}
1 & 0\\
0 & -1
\end{array}\right).
\end{align}
If the homology basis on $X$ is not specified, then the $b$-period matrix $\mathbb{B}$ of $X$ is determined up to the modular action of the group ${\rm Sp}(4;\mathbb{Z})$. If the homology basis on $X$ is constrained only by symmetry condition (\ref{homobasis}), then $\mathbb{B}$ is determined up to the modular action of a certain subgroup $G_{2,1}\subset{\rm Sp}(4;\mathbb{Z})$ described in \S 5, \cite{Giavedoni} (see also \cite{Silhol}). In Theorem 5.2, \cite{Giavedoni}, the fundamental domain of $G_{2,1}$ in the Siegel upper half-space $\mathcal{H}_2$ is described\footnote{The symmetric homology basis 
$$a_+,a'_-:=-\tau\circ a_+,b_+,b'_-:=\tau\circ b_+$$ 
used in \cite{Giavedoni} slightly differs from ours but fundamental domain remains the same. Indeed, the $b$-period matrices $\mathbb{B}$ and $\mathbb{B}'$ corresponding to these homology bases, differ only by the sign change for the off-diagonal entries $\beta\mapsto -\beta$. Thus $\mathbb{B}$ belong to domain (\ref{fundomain}) if and only if so does $\mathbb{B}'$.}; this domain is given by 
\begin{equation}
\label{fundomain}
\gamma\in[0,1/2], \qquad \delta^2\ge 1-\gamma^2+\beta^2 \,(\ge 3/4)\,, \qquad \delta^2>\beta^2.
\end{equation} 
Thus, one can always chose symmetric basis (\ref{homobasis}) is such a way that $\mathbb{B}$ belongs to set (\ref{fundomain}). Then the numbers $\mathfrak{B}_{ab}$ and $\mathfrak{B}_{bb}/\mathfrak{B}_{ab}$ are bounded as $\Lambda\to|\partial_\varphi|$.

4) Recall that any genus $2$ Riemann surface $X$ is hyperelliptic, i.e., it is biholomoprhic to a surface in $\mathbb{C}^2$ given by the equation 
\begin{equation}
\label{hypercu}
y^2=P(x)=x(x-1)(x-\lambda)(x-\mu)(x-\nu),
\end{equation}
where $P$ has no multiple roots. The values of $\lambda,\mu,\nu$ are related to the theta constants
\begin{equation}
\label{teta const}
e_{2a_1,2a_2,2b_1,2b_2}:=\vartheta\left[\begin{smallmatrix}\vec{a}\\ \vec{b}
\end{smallmatrix}\right](0|\mathbb{B})=\sum_{\vec{n}\in\mathbb{Z}^2}{\rm exp}\Big(\pi i(\vec{n}+\vec{a})^t\mathbb{B}(\vec{n}+\vec{a})+2\pi i(\vec{n}+\vec{a})^t\vec{b}\Big) \quad (a_i,b_j=0,\frac{1}{2})
\end{equation}
of $X$ via the formulas \cite{Gaudry,Wamelen}:
$$\lambda=\Big(\frac{e_{0000}\,e_{0010}}{e_{0011}\,e_{0001}}\Big)^2, \ \mu=\Big(\frac{e_{0010}\,e_{1100}}{e_{0001}\,e_{1111}}\Big)^2, \ \nu=\Big(\frac{e_{0000}\,e_{1100}}{e_{0011}\,e_{1111}}\Big)^2.$$
Now we assume that the surfaces, their DN-maps etc., depend on the parameter $r\in\mathbb{N}$ and $\Lambda_r\to|\partial_\varphi|$ as $r\to\infty$. Then there are the following two cases. 
\begin{enumerate}[label=\roman*)]
\item The set $\{\mathbb{B}_r\}_r$ is bounded as $\Lambda\to |\partial_\varphi|$. By passing to a sub-sequence, one can assume that $\mathbb{B}_r\to\mathbb{B}_{\infty}$ as $r\to\infty$. Then formulas (\ref{period matr asymp}) and the convergence $\mu\to 1$ imply that $\mathbb{B}_{\infty}={\rm diag}(B_1,B_2)$ is diagonal. Note that 
$$\vartheta\left[\begin{smallmatrix}\vec{a}\\ \vec{b}
\end{smallmatrix}\right](0|{\rm diag}(B_1,B_2))=\vartheta\left[\begin{smallmatrix}a_1\\ b_1
\end{smallmatrix}\right](0|B_1)\theta\left[\begin{smallmatrix}a_2\\ b_2
\end{smallmatrix}\right](0|B_2)$$
and $\theta[a_1;b_1](0|B_1)=0$ if and only if $a_1=b_1=1/2$. Due to these facts, one has $\mu,\nu\to\infty$ as $\mathbb{B}\to\mathbb{B}_{\infty}$. So, the (hyperelliptic) surface $X_r$ degenerates as $r\to+\infty$.

\item By passing to a sub-sequence, one can assume that $(\Im\mathbb{B})^{-1}\to 0$ as $r\to+\infty$. Then (\ref{teta const}) implies that $e_{2a_1,2a_2,2b_1,2b_2}\to\delta_{a_1,0}\delta_{a_2,0}$ whence $\lambda\to 1$.  
\end{enumerate}
It is worth noting that the asymptotics of the $b$-period matrix for families of degenerating surfaces were described in Corollaries 3.2 and 3.8, \cite{Fay2}; it has been shown that case i) corresponds to the pinching of the surface along a homologically trivial cycle while case ii) corresponds to the pinching along a homologically nontrivial one.

Since in both cases the surface $X_r$ (represented by curve (\ref{hypercu})) degenerates, the set $\{[X_r]\}_r$ of conformal classes of $X_r$ cannot be contained in any compact set of the moduli space $\mathcal{M}_2$.

5) If $l_1(X)\to 0$ does not hold, then there is the sub-sequence $\{X_r\}_r$ such that $l_1(X_r)\ge\epsilon>0$ for some $\epsilon$. Due to the Mumford compactness theorem \cite{Mum}, any set $K_\epsilon=\{[X]\in\mathcal{M}_2 \ | \ l_1(X)\ge\epsilon\}$ ($\epsilon>0$) is compact $\mathcal{M}_2$. However, since $\Lambda_r\to|\partial_\varphi|$ as $r\to\infty$, the above reasoning shows that $\{X_r\}_r$ cannot be contained in any $K_\epsilon$. This contradiction completes the proof of the convergence $l_1(X)\to 0$ as $\Lambda\to |\partial_\varphi|$.
\end{proof}

\begin{figure}[h!]
\center{\includegraphics[width=0.7\linewidth]{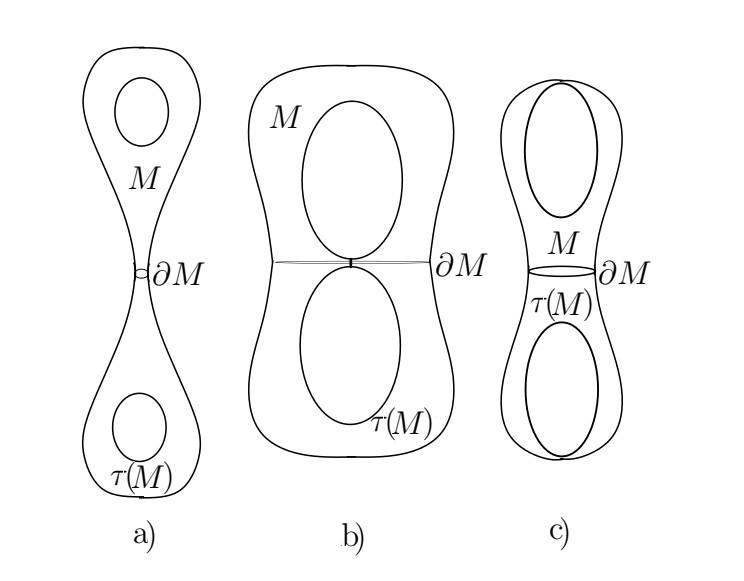}}
\caption{Different types of degeneration of $(X,{\rm h})$.}
\label{degpic}
\end{figure}

To exemplify the statement of Proposition \ref{main}, let us consider the case in which a small closed geodesic curve is the boundary $\partial M\subset X$ itself (see fig. \ref{degpic}, a)), i.e. $l(\partial M)\to 0$, where $l(\cdot)$ denotes the hyperbolic length of the curve on $X$. According to the collar lemma (see Lemma 13.6, \cite{Farb} and the proof of Corollary 4.4, \cite{Wang}), there is the metric neighborhood 
$$\{x\in X \ | \ {\rm dist}_{\rm h}(x,\partial M)\le L\}$$
biholomorphic to an annulus $\{z\in\mathbb{C} \ | \ {\rm log}|z|\in(-r,r)\}$, where 
$$L:={\rm sinh}^{-1}\Big(\frac{1}{{\rm sinh}(l(\partial M)/2)}\Big)\sim-{\rm log}(l(\partial M)), \qquad r=\frac{\pi^2}{l(\partial M)}.$$
Then $M$ is biholomorphic to a surface obtained by removing the set $$|z|\le e^{-r}=e^{-\pi^2/l(\partial M)}$$ from the unit disk $\mathbb{D}$ and then attaching a surface with a handle to the boundary $|z|=e^{-r}$. The surface families degenerating in this way were constructed in \cite{Kor ZNS}.

\keywords{electric impedance tomography, DN maps, surface degeneration.}

\msc{Primary: 35R30, 46J20; Secondary: 46J15, 30F15.}

\end{document}